\DeclareMathSymbol{\Gamma}{\mathalpha}{operators}{0}
\DeclareMathSymbol{\Delta}{\mathalpha}{operators}{1}
\DeclareMathSymbol{\Theta}{\mathalpha}{operators}{2}
\DeclareMathSymbol{\Lambda}{\mathalpha}{operators}{3}
\DeclareMathSymbol{\Xi}{\mathalpha}{operators}{4}
\DeclareMathSymbol{\Pi}{\mathalpha}{operators}{5}
\DeclareMathSymbol{\Sigma}{\mathalpha}{operators}{6}
\DeclareMathSymbol{\Upsilon}{\mathalpha}{operators}{7}
\DeclareMathSymbol{\Phi}{\mathalpha}{operators}{8}
\DeclareMathSymbol{\Psi}{\mathalpha}{operators}{9}
\DeclareMathSymbol{\Omega}{\mathalpha}{operators}{10}
\newtheorem{theorem}{Theorem}[section]
\newtheorem{lemma}[theorem]{Lemma}
\newtheorem{definition}[theorem]{Definition}
\newtheorem{claim}{Claim}
\newcommand{\defparproblem}[4]{
  \vspace{1mm}
\noindent\fbox{
  \begin{minipage}{0.96\textwidth}
  \begin{tabular*}{\textwidth}{@{\extracolsep{\fill}}lr} #1  & {\bf{Parameter:}} #3 \\ \end{tabular*}
  {\bf{Input:}} #2  \\
  {\bf{Problem:}} #4
  \end{minipage}
  }
  \vspace{1mm}
}
\renewcommand{\H}{\mathcal{H}}
\newcommand{\cO}{O}
\newcommand{\eps}{\varepsilon}
\begin{document}
\title{Minimizing Rosenthal Potential in  Multicast Games\footnote{The research leading to these results has received funding from the European Research Council under the European Union's Seventh Framework Programme (FP/2007-2013) / ERC Grant Agreement n. 26795. 
This work is also supported by EPSRC (EP/G043434/1), Royal Society (JP100692), and Nederlandse Organisatie voor Wetenschappelijk Onderzoek (NWO), project: 'Space and Time Efficient Structural Improvements of Dynamic Programming Algorithms'.
A preliminary version of this paper appeared as an extended abstract in the proceedings
of ICALP 2012.
}
}

\author{Fedor V. Fomin\thanks{Department of Informatics, University of Bergen, PB 7803, 5020 Bergen, Norway. E-mails: {\tt{\{fedor.fomin,petr.golovach,michal.pilipczuk\}@ii.uib.no}}}
\addtocounter{footnote}{-1}
\and 
Petr A. Golovach\footnotemark
\and 
Jesper Nederlof\thanks{Utrecht University, Utrecht, the Netherlands. E-mail: \texttt{j.nederlof@uu.nl}}
\addtocounter{footnote}{-2}
\and 
Micha{\l}  Pilipczuk\footnotemark
\addtocounter{footnote}{1}
}

\date{}
\maketitle

\begin{abstract}
   A multicast game is a network design game modelling how selfish non-cooperative agents   build and maintain one-to-many network communication. There is a special source node and a collection of 
     agents located at corresponding terminals. 
   Each agent is  interested in selecting a route from the special source to its terminal minimizing the cost.
    The mutual influence of the agents is determined by a cost sharing mechanism, which  evenly splits the cost of an edge among all the agents using it for routing.
 In this paper we provide several algorithmic and complexity results on finding a Nash equilibrium minimizing the value of Rosenthal  potential. Let $n$ be the number of agents and $G$ be the communication network. We show that 
 \begin{itemize}
 \item  For a given  strategy profile $s$  and integer $k\geq 1$,
there is a local search algorithm which in time $n^{O(k)} \cdot  |G|^{O(1)}$  finds a better strategy profile, if there is any, in a $k$-exchange neighbourhood of $s$. In other words, the algorithm 
  decides  if  Rosenthal  potential  can be decreased  by changing strategies of 
 at most  $k$ agents;
 
 \item The running time of our local search  algorithm is essentially tight: unless  $FPT= W[1]$,   for any function $f(k)$, searching of the   $k$-neighbourhood cannot be done  in time $f(k)\cdot  |G|^{O(1)}$.
   \end{itemize}
The key ingredient of our algorithmic result is a subroutine that finds an equilibrium with minimum potential in $3^n \cdot  |G|^{O(1)}$ time. In other words, finding an equilibrium with minimum potential is fixed-parameter tractable when parameterized by the number of agents.
  \end{abstract}

\section{Introduction}\label{sec:intro}
Modern  networks are often  designed   and used by  non-cooperative individuals with diverse objectives. 
A considerable part of Algorithmic Game Theory focuses on optimization in such networks with selfish users \cite{Albers09,ChenRV10,EpsteinFM07,GuptaKPR07,GuptaST08,KoutsoupiasP09,RoughgardenS09,RoughgardenT02}.

In this paper we study the 
 conceptually simple but mathematically rich  cost-sharing model  introduced by Anshelevich et al. \cite{AnshelevichDKTWR08,AnshelevichDTW08}, see also  
\cite[Chapter~12]{KleinbergT05}. In a variant of the cost-sharing game, which was called by Chekuri et al. the \emph{multicast game}  \cite{ChekuriCLNO07}, the network is represented by a weighted directed graph with 
 a distinguished  source node $r$, and a collection of  $n$
     agents located at corresponding terminals.   
   Each agent is trying to select a cheapest  route from $r$  to its terminal.
   The mutual influence of the players is determined by a cost sharing mechanism identifying   how the cost of each edge in the network is shared among the agents using this edge.
When $h$ agents use an edge $e$ of cost $c_e$, each of them has to pay $c_e/h$. 
This is a very natural cost sharing formula which is also the outcome of the Shapley value.

The  multicast game belongs to the widely studied class of congestion games. This class of games was defined by
Rosenthal \cite{Rosenthal73}, who also 
 proved that every congestion game has a Nash equilibrium.   Rosenthal
 showed that for every congestion game it is possible to define a potential function 
which decreases if a player improves its selfish cost.  
Best-response dynamics in these games always lead to a set of paths that forms  a Nash equilibrium. Furthermore, every local minimum of 
Rosenthal potential  corresponds to a Nash equilibrium and vice versa.  However, while 
we know that the multicast game always has  a Nash equilibrium, the number of iterations in best-response dynamics achieving an equilibrium can be exponential (see \cite[Theorem 5.1]{AnshelevichDKTWR08}), and  it is an important open question if any Nash  equilibrium can be found in polynomial time.  
The next step in the study of Rosenthal potential  was done by  
Anshelevich et al. \cite{AnshelevichDKTWR08}, who  showed that Rosenthal potential  can be used not only for proving the  existence of a  Nash equilibrium  but also to estimate  the quality of equilibrium.  Anshelevich et al. defined the price of stability,  
  as   the ratio of the best Nash equilibrium  cost and the optimum network cost, the {\it social optimum}.  In particular, the cost of a Nash equilibrium 
  minimizing  Rosenthal potential is within $\log{n}$-factor of the social optimum, and thus the global minimum of the potential brings to a cheap equilibrium. The computational complexity of finding a Nash equilibrium achieving the bound of $\log{n}$ relative to the social optimum is still open, while computing the minimum of the Rosenthal potential  is NP-hard  \cite{AnshelevichDKTWR08,ChekuriCLNO07}.

 \paragraph{Our results.} In this paper we analyze the following  local search problem. Given a strategy profile $s$, we are interested if  a profile with a smaller value of  Rosenthal potential  can be found in a   $k$-exchange neighbourhood of $s$, which is the set of all profiles that can be obtained from $s$ by  changing  strategies of at most $k$ players.
 Our motivation to study this problem is two-fold.
 \begin{itemize}
  \item If we succeed in finding some Nash equilibrium, say by implementing best-response dynamics, which is still far from the social optimum, it is an important question if the already found equilibrium can be used to find a better one efficiently. Local search heuristic in this case is a natural approach.
  \item
 Since the number of iterations in best-response dynamics scenario can be exponential (see \cite[Theorem 5.1]{AnshelevichDKTWR08}), it can be useful to 
 combine the best-response dynamics with a heuristic that at some moments tries to make ``larger jumps", i.e., instead of decreasing Rosenthal potential  by changing strategy of one player, to decrease the potential by changing  in one step strategies of several players.  
 \end{itemize}
 Let us remark that the number of paths, and thus strategies, every player can select from, is exponential, so the 
 size of  the search space also can be exponential.  Since the size of $k$-exchange neighbourhood is exponential, 
 it is not clear a priori, if searching of a smaller value of Rosenthal potential in a  $k$-exchange neighbourhood of a given strategy profile can be done in polynomial time. We show that for a fixed $k$, the local search can be performed in polynomial time. The running time of our algorithm is $n^{O(k)} \cdot  |G|^{O(1)}$, where $n$ is the total number of players\footnote{The number of arithmetic operations used by our algorithms does not depend on the size of the input weights, i.e. the claimed running times are in the unit-cost model.}. As a subroutine, our algorithm uses a fixed-parameter algorithm  computing the minimum of  Rosenthal potential  in time $3^n \cdot  |G|^{O(1)}$. We find this auxiliary algorithm   to be interesting in its own. 
 It is known that for a number of  local search algorithms, exploration of the  $k$-exchange neighbourhood can be done by fixed-parameter tractable (in $k$) algorithms \cite{FellowsFLRSV12,Marx08a,Szeider09}.   We show that, unfortunately, this is not the case for the  local search of   Rosenthal potential  minimum.   
 We use tools from Parameterized Complexity, to show that the running time of our local search algorithm is essentially tight: unless  $FPT= W[1]$, searching of the $k$-neighbourhood cannot be done in time $f(k)\cdot  |G|^{O(1)}$ for any function $f(k)$.

\section{Preliminaries}
\paragraph{Multicast game and Rosenthal potential.}
A network is modeled by a directed $G = (V, E)$ graph.  
There is a special \emph{root} or \emph{source} node $r \in  V$. 
There are $n$ multicast users, \emph{players}, and each player has a specified \emph{terminal} node $t_i$ (several players can have the same terminals). 
A strategy $s^i$ for player $i$ is a path $P_i$ from $r$ to  $t_i$ in $G$. We denote by  $\Pi$ the set of players and by  
$S^i$ the finite set of strategies  of player $i$, which is the set of all paths from $r$ to $t_i$. The joint strategy space $S = S^1 \times S^2 \times \cdots \times S^n$ is the Cartesian product of all the possible strategy profiles.  At any given moment, a strategy
profile (or a configuration) of the game $s \in  S$  is the vector of all the strategies of the players, $s = (s^1, \dots , s^n)$.
Notice that for a given strategy profile $s$, several players can use paths that go through the same edge. For each edge $e\in E$ and a positive integer $h$, we have a cost $c_e(h)\in \mathbb{R}$ of the edge $e$ for each player who uses a path containing $e$, provided that exactly $h$ players share $e$.    
With each player $i$, we associate the cost function $c^i$ mapping a strategy profile $s\in S$ to real numbers, i.e., $c^i :S \to \mathbb{R}$.  For a strategy profile $s \in  S$, let $n_e(s)$ be the number of players using the edge $e$ in $s$. 
Then the cost the $i$-th player has to pay is 
 \[ c^i(s)=\sum_{e\in {E(P_i)}}c_e(n_e(s)),  
\]
and the total cost of $s$ is 
\[ c(s)=\sum_{i=1}^n c^i(s).  
\]
 The  \emph{potential} of   a strategy profile $s \in  S$, or equivalently, the set of paths $(P_1, \dots, P_n)$, is 
\begin{equation}\label{eq:potentialdef}
\Phi(s)=\sum_{e\in \cup_{i=1}^n E(P_i)}\sum_{h=1}^{n_e(s)}c_e(h).
\end{equation}

In this paper, we are especially interested in the case where the cost of every edge is split evenly between the players sharing it, i.e, the payment of player $i$ for edge $e$ is $c_e(h)=\frac{c_e}{h}$	for $c_e\in\mathbb{R}$. Respectively,   \emph{Rosenthal potential} of   a strategy profile $s \in  S$ is 
\[
\Phi(s)=\sum_{e\in \cup_{i=1}^n E(P_i)} c_e \cdot \H(n_e(s)),
\]
where $ \H(h) = 1 + 1/2 + 1/3 + \cdots  + 1/h$ is  the $h$-th Harmonic number.
 
For a strategy profile $s \in  S$ and $i \in \{1,2, \dots, n\}$, we denote by 
$s^{-i}$ the strategy profile of the players $j\neq i$, i.e. $s^{-i} = (s^1, \dots ,s^{i-1}, s^{i+1}, \dots  s^n)$. 
We use $(s^{-i}, \bar{s}^i)$ to denote the strategy profile identical to $s$, except that the $i$th player uses strategy $\bar{s}^i$ instead of ${s}^i$.
Similarly, for a subset of players $\Pi_0$, we define  $s^{-\Pi_0}$, the profile  of players $j\not\in\Pi_0$. For $\sigma\in \times_{i\in \Pi_0}S^i$, we denote by   $(s^{-\Pi_0}, {\sigma})$ the strategy profile obtained from  $s$ by changing the strategies of players in  $\Pi_0$ to $\sigma$.

 A strategy profile  $s \in S$ is a \emph{Nash equilibrium} if no player $i \in \Pi$ can benefit from unilaterally deviating from his action to another action, i.e., 
\[
\forall i \in  \Pi \text{ and } \forall \bar{s}^i \in S^i, 	 ~  c^i(s^{-i}, \bar{s}^i) \geq  c^i(s).
\]
The crucial property of Rosenthal potential $\Phi$ is that each step performed by a player improving his payoff also decreases $\Phi$.
Consequently, if $\Phi$ admits a minimal   value in strategy profile, this strategy profile is  a Nash equilibrium.

\paragraph{Parameterized complexity.}
We briefly review the relevant concepts of parameterized complexity theory that we employ.  For deeper background on the subject see the books by Downey and Fellows~\cite{DowneyF99}, Flum and Grohe~\cite{FlumGrohebook}, 
and Niedermeier~\cite{Niedermeierbook06}.

In the classical framework of P {\it vs} NP, there is only one measurement (the overall input size) that frames
the distinction between efficient and inefficient algorithms, and between tractable and intractable problems.
Parameterized complexity is essentially a two-dimensional sequel, where in addition to the overall input size $n$,
a secondary measurement $k$ (the {\it parameter}) is introduced, with the aim of capturing the contributions
to problem complexity due to such things as typical input structure, sizes of solutions, goodness of approximation, etc.
Here, the parameter is deployed as a measurement of the amount of current solution modification allowed in a local search
step.  The parameter can also represent an aggregrate of such bounds.

The central concept in parameterized complexity theory is the concept of {\it fixed-parameter tractability} (FPT), that is
solvability of the parameterized problem in time $f(k)\cdot n^{O(1)}$.  
The importance is that such a running time isolates all the exponential costs to a function of only the parameter.

The main hierarchy of parameterized complexity classes is
$$ FPT \subseteq W[1] \subseteq W[2] \subseteq \cdots \subseteq W[P] \subseteq XP.$$

The formal definition of classes $W[t]$ is technical, and, in fact, irrelevant to the scope of this paper. For our purposes it suffices to say that a problem is in a class if it is FPT-reducible to a complete problem in this class. 
Given two parameterized problems $\Pi$ and $\Pi'$, an \emph{FPT reduction} from $\Pi$ to $\Pi'$ maps an instance $(I,k)$
of $\Pi$ to an instance $(I',k')$ of $\Pi'$ such that 
\begin{itemize}
\item[(1)] $k' = h(k)$ for some computable function $h$, 
\item[(2)] $(I,k)$ is a {\sc YES}-instance of $\Pi$ if and only if $(I',k')$ is a {\sc YES}-instance of $\Pi'$, and 
\item[(3)] the mapping can be computed in FPT time.
\end{itemize} 

Hundreds of natural problems are known to be complete for the aforementioned classes, and $W[1]$ is considered the
parameterized analog of NP, because the {\sc $k$-Step Halting Problem} for nondeterministic Turing machines of
unlimited nondeterminism (trivially solvable by brute force in time $O(n^{k})$) is complete for $W[1]$. Thus, the statement $FPT\neq W[1]$ serves as a plausible complexity assumption for proving intractability results in parameterized complexity. {\sc Independent Set}, parameterized by solution size, is a more combinatorial example of a problem complete for $W[1]$. 
We refer the interested reader to the books by Downey and Fellows~\cite{DowneyF99} or Flum and Grohe~\cite{FlumGrohebook} for a more detailed introduction to the hierarchy of parameterized problems.

\paragraph{Local Search.} Local search algorithms are among the most common heuristics used to solve computationally hard optimization problems.
The common method of  local search algorithms is to move from solution to solution by applying local changes.
Books   \cite{EmileLbook,EmileMKbook} provide a nice introduction to the wide area of local search. 
 Best-response dynamics in congestion games corresponds to local search in $1$-exchange neighbourhood minimizing Rosenthal potential $\Phi$; improving moves for players decrease the value of the potential function. 
For  strategy profiles $s_1, s_2 \in  S$, we define the Hamming distance $D(s_1, s_2)=  |s_1\bigtriangleup s_2 |$  between $s_1$ and $s_2$, that is the number of  players implementing different strategies in $s_1$ and $s_2$.  We study the following parameterized version of the local search problem for multicast.

We define 
  \emph{arena} as  a directed graph $G$ with root vertex $r$, a multiset of target vertices $t_1,\ldots,t_\ell$ and for every edge $e$ of the graph a cost function $c_e: \mathbb{Z}^+ \rightarrow \mathbb{R}^+ \cup \{0\}$ such that 
 $c_e(h) \geq c(h+1)$ for $h \geq 1$.
 \medskip

\defparproblem{\textsc{$p$-Local Search  on  potential  $\Phi$}}{An arena consisting of graph $G$, vertices $r,(t_1,\ldots,t_\ell)$ and cost functions $c_e$, a strategy profile $s$, and an integer $k\geq 0$}{k}{Decide whether there is a strategy profile $s'$ such that $\Phi(s')<\Phi(s)$ and $D(s,s')\leq k$, where $\Phi$ is as defined in~\eqref{eq:potentialdef}.}

\section{Minimizing   Rosenthal potential}\label{sec:XP}
The aim of this section is to prove 
the following theorem.

\begin{theorem}\label{thm:RosenthalXP} The \textsc{$p$-Local Search  on  potential  $\Phi$} problem is solvable in time 
\[
\binom{|\Pi |}{k} \cdot 3^k \cdot  |G|^{\cO(1)}.
\]
\end{theorem}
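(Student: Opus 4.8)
The plan is to turn the local search into $\binom{|\Pi|}{k}$ independent ``minimum potential'' computations, one for each candidate set of players that is permitted to move, and to solve each such computation by a Dreyfus--Wagner-style dynamic program running in $3^{k}\cdot|G|^{O(1)}$ time. First I would observe that it suffices to enumerate the sets $\Pi_0\subseteq\Pi$ of size exactly $\min(k,|\Pi|)$, and for each of them to compute the smallest value of $\Phi$ attainable by re-routing only the players of $\Pi_0$ while every player of $\Pi\sm\Pi_0$ keeps its current strategy; the instance is a \textsc{Yes}-instance iff some $\Pi_0$ yields a value strictly below $\Phi(s)$. Enumerating size-$k$ sets already captures every profile $s'$ at Hamming distance $\le k$: if $s'$ differs from $s$ on a set of at most $k$ players, extend that set arbitrarily to a size-$k$ set $\Pi_0$, and note that the optimal re-routing of $\Pi_0$ is free to leave the padding players on their current paths, so it attains a value no larger than $\Phi(s')$. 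This accounts for the $\binom{|\Pi|}{k}$ factor.

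Next, for a fixed $\Pi_0$ I would eliminate the frozen players by folding them into the edge costs. Let $b_e$ be the number of players of $\Pi\sm\Pi_0$ whose path uses $e$. If $m_e$ players of $\Pi_0$ additionally use $e$, the contribution of $e$ to $\Phi$ equals $\sum_{h=1}^{b_e+m_e}c_e(h)$, which splits into the constant $\sum_{h=1}^{b_e}c_e(h)$ plus $\sum_{h=1}^{m_e}\tilde c_e(h)$, where $\tilde c_e(h):=c_e(b_e+h)$. Since $c_e$ is non-increasing, so is $\tilde c_e$, hence $(G,r,(t_i)_{i\in\Pi_0},\tilde c)$ is again a valid arena. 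Thus, up to the fixed additive constant $\sum_e\sum_{h=1}^{b_e}c_e(h)$, minimizing $\Phi$ over re-routings of $\Pi_0$ is exactly the problem of finding a minimum-potential profile for the $|\Pi_0|\le k$ players on this shifted arena.

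The core is therefore a subroutine computing the minimum potential of a profile for $k$ players in $3^{k}\cdot|G|^{O(1)}$ time. The structural fact I would establish is that, because each $C_e(h):=\sum_{j=1}^{h}c_e(j)$ is concave (as $c_e$ is non-increasing) and all players emanate from the single source $r$, there is always an optimal profile whose union of paths is an arborescence rooted at $r$; in such a tree each player follows the unique $r$-to-$t_i$ path and the load on an edge equals the number of players whose terminal lies in the subtree below it. I would prove this by an uncrossing/exchange argument (equivalently, a concave cost is minimized at an extreme point of the single-source flow polytope, and such extreme points have forest support with integral loads). Given this, I would run a Dreyfus--Wagner recurrence over states $(v,D)$ with $v\in V$ and $D\subseteq\Pi_0$, where $g(v,D)$ is the minimum cost of a subtree rooted at $v$ spanning the terminals of $D$, using a branch step $g(v,D)=\min_{\emptyset\neq D'\subsetneq D} g(v,D')+g(v,D\sm D')$, an extend step $g(v,D)=\min_{(v,u)\in E}\tilde C_{(v,u)}(|D|)+g(u,D)$, and base cases that peel off at cost $0$ the players whose terminal is $v$. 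The decisive point is that the load on an extended edge is pinned down by the state as $|D|$, so load-dependent costs cause no difficulty; the answer is $g(r,\Pi_0)$ plus the additive constant. Summing $2^{|D|}$ over all $D$ bounds the branch steps by $3^{k}\cdot|V|$, while the extend steps amount to $2^{k}$ non-negative-weight shortest-path closures, giving a total of $3^{k}\cdot|G|^{O(1)}$ and hence the claimed bound.

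The step I expect to be the main obstacle is the structural claim that an optimal profile may be taken to be an arborescence with edge loads equal to subtree sizes: this is precisely what licenses the subset-indexed dynamic program, in which an edge's cost depends only on the single number $|D|$, and it simultaneously guarantees that restricting attention to trees loses no optimality. The accompanying consistency check — verifying that the shifted costs $\tilde c_e$ remain non-increasing, so that the same structural result applies unchanged inside every one of the $\binom{|\Pi|}{k}$ subproblems — is small but essential for the reduction to go through.
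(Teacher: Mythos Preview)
Your proposal is correct and follows essentially the same approach as the paper: enumerate the $\binom{|\Pi|}{k}$ candidate sets $\Pi_0$, absorb the frozen players into shifted edge costs $\tilde c_e(h)=c_e(b_e+h)$ (which remain non-increasing), prove that an optimal profile may be taken to be an out-tree rooted at $r$, and then run a Dreyfus--Wagner style dynamic program over subsets of $\Pi_0$. The only cosmetic difference is in the DP formulation: the paper carries an explicit arc-count parameter $m$ and uses a single ``pick one out-arc and split'' transition, whereas you use the classical two-step branch/extend recurrence with a shortest-path closure per subset; both yield the $3^{k}\cdot|G|^{O(1)}$ bound for the subroutine and hence the claimed overall running time.
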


Let us remark that in particular, if $\Phi$ is Rosenthal's potential, and hence the cost functions are of the special type $c_e(h)=\frac{c_e}{h}$, the \textsc{$p$-Local Search  on  potential  $\Phi$} problems can be solved within the running time of Theorem~\ref{thm:RosenthalXP}.

We need some additional terminology. Let $G$ be a directed graph.  
We say that a subdigraph $T$ of $G$ is an \emph{out-tree} if $T$ is a directed tree with only one vertex $r$ of in-degree zero (called the
\emph{root}). The vertices of $T$ of out-degree zero are called \emph{leaves}.
We also say that a strategy profile $s^*$ is \emph{optimal} if it gives the minimum value of the potential, i.e.,
for any other strategy profile $s$, $\Phi(s)\geq \Phi(s^*)$. 
Let $s=(P_1,\ldots,P_{|\Pi|})$ be a  strategy profile and $C\geq 1$ be an integer. We say that \emph{$s$ uses  $C$ arcs} 
if the union $T$ of the paths $P_i$ consists of   $C$ arcs.

If edge-sharing is profitable, then we can make the following observation about the structure of optimal strategies.

\begin{lemma}\label{lem:tree} Let $C$ be an integer such that there is a strategy profile using at most $C$ arcs. 
Let $s=(P_1,\ldots,P_{|\Pi|})$ be  a strategy profile using at most $C$ arcs such that
\begin{itemize}
\item[(i)]  Among all profiles using at most $C$ arcs, $s$ is   optimal. In other words,   for any profile $s'$ using at most $C$ arcs, we have 
 $\Phi(s')\geq \Phi(s)$.
 \item[(ii)] Subject to (i),  $S$ uses the minimum  number of arcs.
\end{itemize}
Then   the union $T$ of the paths $P_i$, $i\in \{1, \dots, |\Pi|\}$,  is an out-tree rooted in $r$.  
\end{lemma}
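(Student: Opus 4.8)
The plan is to argue by contradiction, exploiting the two minimality conditions in tandem: I will assume that $T$ is \emph{not} an out-tree and construct a strategy profile $s'$ that still uses at most $C$ arcs, satisfies $\Phi(s')\le\Phi(s)$, and uses strictly fewer arcs than $s$. By optimality~(i) this forces $\Phi(s')=\Phi(s)$, and then the strictly smaller arc count contradicts minimality~(ii). Before that I would record the easy structural facts: since every $P_i$ starts at $r$ and is a simple path, every vertex of $T$ is reachable from $r$ inside $T$, so $T$ is weakly connected, $r$ has in-degree $0$, and every other vertex has in-degree at least $1$. Counting arcs by in-degrees shows that $T$ is an out-tree rooted at $r$ if and only if every non-root vertex has in-degree exactly $1$. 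Hence it suffices to rule out a vertex $v\neq r$ of in-degree at least $2$.

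Assume such a $v$ exists. Let $D$ be the set of players whose path passes through $v$; then $D\neq\emptyset$, and any arc of $T$ entering $v$ is used only by players of $D$ (using it means passing through $v$). For $i\in D$ split $P_i=\alpha_i\cdot\beta_i$ at $v$, where $\alpha_i$ is the $r$–$v$ prefix and $\beta_i$ the $v$–$t_i$ suffix. For each $j\in D$ I consider the \emph{consolidation move} $M_j$ that reroutes every $i\in D$ to the path $\alpha_j\cdot\beta_i$, forcing all of $D$ to reach $v$ along the single prefix $\alpha_j$. Each $M_j$ uses only arcs already present in $T$, and afterwards $v$ is entered solely through the last arc of $\alpha_j$; since all incoming arcs of $v$ carry only $D$-players, every other incoming arc of $v$ (there is at least one, as $v$ has in-degree at least $2$) becomes unused. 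Thus $M_j$ keeps the arc count at most $C$ and strictly decreases it, so it only remains to find one index $j$ with $\Phi(M_j(s))\le\Phi(s)$.

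This existence of a potential-non-increasing move is the crux, and I would establish it by averaging over the $|D|$ moves. Writing the per-edge potential as $g_e(h)=\sum_{l=1}^{h}c_e(l)$, only arcs lying on some prefix $\alpha_i$ can change load under any $M_j$. For such an arc $e$ let $q_e$ count the uses of $e$ that are unaffected by the moves (all non-$D$ players, together with $D$-players that traverse $e$ inside a suffix), and let $p_e$ be the number of $i\in D$ with $e\in\alpha_i$. After $M_j$ the arc $e$ carries $q_e+|D|$ uses if $e\in\alpha_j$ and $q_e$ otherwise, and the number of indices $j$ with $e\in\alpha_j$ is exactly $p_e$. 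Averaging $\Phi(M_j(s))-\Phi(s)$ over $j\in D$ therefore contributes, for each edge $e$,
\[
\tfrac{p_e}{|D|}\,g_e(q_e+|D|)+\Bigl(1-\tfrac{p_e}{|D|}\Bigr)g_e(q_e)-g_e(q_e+p_e).
\]
Because the cost functions satisfy $c_e(h)\ge c_e(h+1)$, each $g_e$ is concave, and with $\lambda=p_e/|D|\in[0,1]$ this expression equals $\lambda g_e(q_e+|D|)+(1-\lambda)g_e(q_e)-g_e\bigl(\lambda(q_e+|D|)+(1-\lambda)q_e\bigr)\le 0$. Summing over $e$, the average over $j$ of $\Phi(M_j(s))-\Phi(s)$ is nonpositive, so some $M_{j^{\ast}}$ does not increase the potential, which completes the contradiction.

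The main obstacle is exactly this non-increase of $\Phi$: a single fixed reroute can easily increase the potential, since consolidating onto an expensive prefix is costly, so one cannot simply pick an arbitrary prefix. Averaging over all possible target prefixes, together with the combinatorial identity that arc $e$ is ``loaded'' by precisely $p_e$ of the moves, is what turns profitability of sharing (concavity of $g_e$) into the guaranteed existence of a good move. Some care is also needed in isolating the genuinely changing part of the potential through the bookkeeping term $q_e$, so that suffix overlaps and the traffic of players outside $D$ do not interfere with the concavity estimate.
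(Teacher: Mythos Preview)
Your proof is correct and takes a genuinely different route from the paper's. The paper argues via \emph{unilateral} deviations: it picks two players $i,j$ whose paths enter $v$ through distinct arcs, compares the per-player prefix costs $\sum_{e\in P_i^v}c_e(n_e(s))$ and $\sum_{e\in P_j^v}c_e(n_e(s))$, and uses the exact-potential identity of congestion games (a single player's cost change equals the change in $\Phi$) to conclude that a strict imbalance would let one player improve, contradicting optimality~(i). With the two prefix costs therefore forced to agree, players are rerouted one at a time through $P_i^v$ without increasing $\Phi$, iterating until $v$ has in-degree one. You instead change all players through $v$ in one shot and secure a non-increasing move by averaging over the $|D|$ candidate consolidations, appealing directly to the concavity of $g_e$. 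Your argument is cleaner in that it avoids iteration and never invokes the single-deviation potential identity; the paper's argument has the virtue of making the Nash-equilibrium intuition explicit (an optimal profile admits no improving deviation).

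Two technical points are glossed over but do no harm. First, $\alpha_j\cdot\beta_i$ need not be a simple path (the paper passes to a subpath explicitly); since dropping arcs can only lower loads and each $g_e$ is non-decreasing, your load formula is an upper bound on the true potential of $M_j(s)$, which is all the averaging step requires. Second, when $e\in\alpha_j$ your count $q_e+|D|$ may overcount: a player $i\in D$ with $e\in\beta_i$ is already among the $|D|$ users of $e$ via the common prefix $\alpha_j$, so the true load is $|D|$ plus only the non-$D$ users of $e$. This again only loosens the upper bound and leaves the averaging inequality, and hence the contradiction, intact.
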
 

\begin{proof}
Targeting towards  a contradiction, let us assume that
$T=\cup_{i=1}^{|\Pi|}P_i$ is not an out-tree.
Then there are paths $P_i,P_j$, $i,j\in\{1,\ldots,|\Pi|\}$, that have a common vertex $v\neq r$
such that the $(r,v)$-subpaths $P_i^v$ and $P_j^v$ of $P_i$ and $P_j$ respectively 
enter $v$ by different arcs. 

We show first  that 
\begin{equation}
\sum_{e\in E(P_i^v)}c_e(n_e(s))> \sum_{e\in E(P_j^v)}c_e(n_e(s)).
\label{ec:one}
\end{equation} 
cannot occur.  Assume that (\ref{ec:one}) holds. 
We claim that then the $i$-th player can improve his strategy and, consequently, $\Phi$ can be decreased, which will contradict the optimality of $s$. 
Denote by $P$ the $(r,t_i)$-walk obtained from $P_i$ by  replacing path $P_i^v$ by $P_j^v$. Notice that $P$ is not necessarily a path. Let $P'$ be a $(r,t_i)$-path in $P$
and let us construct the new strategy profile $s'=(s^{-i},P')$. This profile uses at most $C$ arcs. 
By non-negativity of $c_e(h)$, the new cost for the $i$-th player is equal to
\begin{align*}
\sum_{e\in E(P')}c_e(n_e(s'))&=&\sum_{e\in E(P')\cap E(P_i)}c_e(n_e(s))+\sum_{e\in E(P')\setminus E(P_i)}c_e(n_e(s)+1)&\leq\\
                          &\leq& \sum_{e\in E(P)\cap E(P_i)}c_e(n_e(s))+\sum_{e\in E(P)\setminus E(P_i)}c_e(n_e(s)+1).&\\
\end{align*}
Since for each $e\in E$ and $h\geq 1$, we have $c_e(h)\geq c_e(h+1)$, 
$$\sum_{e\in E(P)\setminus E(P_i)}c_e(n_e(s)+1)\leq\sum_{e\in E(P)\setminus E(P_i)}c_e(n_e(s)).$$
Therefore,
$$\sum_{e\in E(P')}c_e(n_e(s'))\leq\sum_{e\in E(P)}c_e(n_e(s)).$$
By (\ref{ec:one}), we have
$$\sum_{e\in E(P)}c_e(n_e(s))< \sum_{e\in E(P_i)}c_e(n_e(s)),$$ 
and the claim that player $i$ can improve follows.

Hence,
$$\sum_{e\in E(P_i^v)}c_e(n_e(s))\leq \sum_{e\in E(P_j^v)}c_e(n_e(s)).$$ 
By the same arguments as above, we can replace $P_j$ by a $(r,t_j)$-path $P$ in the walk 
obtained from $P_j$ by the replacement of $P_j^v$ by $P_i^v$ without increasing $\Phi$.
Moreover, we can repeat this operation for each path $P_h$, $h\neq i$, that 
enters $v$ by an arc that is different from the arc in $P_i$. The number of arcs used by the paths in the modified strategy profiles is at most the number of arcs used in $s$, and thus is at most $C$. 
It remains to observe that we obtain a strategy profile where $v$ has in-degree one in the union of paths. But it contradicts the choice of $s$, since we obtain a strategy profile that uses less arcs.  Hence, $T$ is an out-tree rooted in $r$.
\end{proof}

We use Lemma~\ref{lem:tree} to find an optimal strategy profile using the approach proposed by Dreyfus and Wagner~\cite{DreyfusW72} for the {\textsc{Steiner Tree}} problem. 

\begin{theorem}\label{thm:FPT} Given an arena as input, the minimum value of a potential $\Phi$ can be found in time 
$3^{|\Pi|} \cdot  |G|^{\cO(1)}$. The algorithm can also construct the corresponding optimal strategy profile $s^*$ within the same time complexity.
\end{theorem}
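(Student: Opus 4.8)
The plan is to compute the minimum potential by dynamic programming over subsets of players, following the Dreyfus--Wagner scheme for \textsc{Steiner Tree}. By Lemma~\ref{lem:tree} applied with $C=|E|$ (every profile uses at most $|E|$ arcs, since the union of its paths is a subgraph of $G$), some globally optimal profile has the union of its paths forming an out-tree $T$ rooted at $r$. In such a tree each player $i$ is routed along the unique $r$--$t_i$ path, so the number $n_e$ of players using an arc $e$ equals the number of players whose terminal lies in the subtree hanging below $e$. Writing $w_e(m)=\sum_{h=1}^{m}c_e(h)$ for the cost of an arc carrying $m$ players (with $w_e(0)=0$), the potential becomes $\Phi(T)=\sum_{e\in E(T)}w_e(n_e)$, so it suffices to minimize $\sum_e w_e(n_e)$ over out-trees rooted at $r$ that reach all terminals. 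The monotonicity $c_e(h)\geq c_e(h+1)$ makes each $w_e$ nondecreasing and subadditive, and this subadditivity is the property that will make the recurrence sound.

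First I would precompute $w_e(m)$ for every arc $e$ and every $m\in\{1,\dots,|\Pi|\}$ by prefix sums, and the single-player distances $\mathrm{dist}(v,u)$ (shortest $v$--$u$ paths under arc weights $w_e(1)=c_e(1)$), both in $|G|^{\cO(1)}$ time. Then, for a vertex $v$ and a nonempty $X\subseteq\Pi$, I define $f(v,X)$ as the minimum of $\sum_{e\in E(T')}w_e(m_e)$ over out-trees $T'$ rooted at $v$ that route every player of $X$ from $v$ to its terminal; the target value is $f(r,\Pi)$. The recurrence has two moves mirroring the structure of an out-tree: a branching move $f(v,X)\gets\min_{\emptyset\neq X_1\subsetneq X}\big(f(v,X_1)+f(v,X\setminus X_1)\big)$, corresponding to $v$ having out-degree at least two, and a one-arc extension $f(v,X)\gets\min_{(v,u)\in E}\big(w_{(v,u)}(|X|)+f(u,X)\big)$, corresponding to all of $X$ leaving $v$ along a common arc. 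Players whose terminal is $v$ are absorbed for free, giving the base case $f(v,X)=0$ whenever every player of $X$ has terminal $v$ (in particular $f(v,\{i\})=\mathrm{dist}(v,t_i)$). I would evaluate $f$ by increasing $|X|$; for a fixed $X$ the branching move reads only strictly smaller sets and so supplies ``source'' values, after which a single nonnegative-weight shortest-path computation on $G$ (with weights $w_e(|X|)$) resolves all extension moves for that $X$ simultaneously.

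For correctness I would prove $f(v,X)=\mathrm{OPT}(v,X)$, the true minimum potential of serving $X$ from $v$, by induction on $|X|$ (the shortest-path computation handling the fixed-$|X|$ extension moves). The inequality $f\leq\mathrm{OPT}$ follows by decomposing an optimal out-tree at its first branching-or-terminal vertex; because the optimum is a genuine tree, the two subtrees at a branch are arc-disjoint, so the additive branching move loses nothing. The reverse inequality $f\geq\mathrm{OPT}$ follows by turning any decomposition witnessed by the recurrence into an actual strategy profile, concatenating the common path with the two sub-solutions and routing each player along its walk; here subadditivity of $w_e$ guarantees that merging the possibly-overlapping subtrees can only decrease the cost, so the profile's potential is at most the recurrence value and hence at least $\mathrm{OPT}$. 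Combining these with Lemma~\ref{lem:tree}, $f(r,\Pi)$ equals the global minimum of $\Phi$, and storing argmins lets me reconstruct the optimal out-tree and the profile $s^*$ by standard traceback.

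For the running time, the branching move costs $\sum_{X}2^{|X|}\cdot|V|=3^{|\Pi|}\cdot|V|$ arithmetic operations in total (each $X$ ranging over its $2^{|X|}$ subsets), while the $2^{|\Pi|}$ shortest-path computations contribute $2^{|\Pi|}\cdot|G|^{\cO(1)}$; together with the polynomial preprocessing this yields the claimed $3^{|\Pi|}\cdot|G|^{\cO(1)}$ bound. I expect the main obstacle to be the soundness of the branching move under the concave, load-dependent costs: unlike ordinary \textsc{Steiner Tree}, simply summing the two sub-solutions is justified only because the global optimum is a tree (so the relevant subtrees are disjoint) and because subadditivity of $w_e$ keeps the recurrence value on the correct side of $\mathrm{OPT}$ when the reconstructed pieces overlap. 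Making both inequalities meet around this point is the crux of the argument.
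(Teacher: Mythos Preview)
Your proposal is correct and follows essentially the same Dreyfus--Wagner dynamic programming over subsets of players as the paper, relying on Lemma~\ref{lem:tree} for the out-tree structure of an optimum and on the monotonicity $c_e(h)\geq c_e(h+1)$ (equivalently, subadditivity of $w_e$) to make the split sound. The only difference is in bookkeeping: the paper adds a third parameter $m$ bounding the number of arcs and merges the ``branch'' and ``extend'' moves into a single step $\Psi(u,X,m)=\min\{\Psi(u,X\setminus Y,m_1)+\Psi(v,Y,m_2)+\sum_{h\le|Y|}c_{(u,v)}(h)\}$, whereas you keep the classical two-move recurrence and eliminate the circular dependence within a fixed $X$ by a single nonnegative-weight shortest-path pass; both yield $3^{|\Pi|}\cdot|G|^{\cO(1)}$.
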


\begin{proof}
We give a dynamic programming algorithm. For simplicity, we only describe how to find the minimum of $\Phi$, but it is straightforward to modify the algorithm to obtain the corresponding strategy profile.

Let $T=\{t_1,\ldots,t_{|\Pi|}\}$ be the multiset of terminals. We construct partial solutions for subsets $X\subseteq T$. Also, while at the end we are interested in the answer for the source $r$, our partial solutions are constructed for all vertices of $G$. For a vertex $u\in V(G)$ and a multiset $X\subseteq T$, let $\Gamma_u^X$ denote the version of the game, in which only players associated with $X$ build paths from $u$ to their respective terminals. Therefore, we are interested in the game $\Gamma_r^T$. For a non-negative integer $m$, we define $\Psi(u,X,m)$ as the minimum value of the potential $\Phi(s)$ in the game $\Gamma_u^X$, taken over all strategy profiles $s$ such that the union of paths in $s$ contains at most $m$ arcs (we say that $s$ {\emph{uses}} arc $e$ if it is contained in some path from $s$). We assume that $\Psi(u,X,m)=+\infty$ if there are no feasible strategy profiles. 
Notice that by Lemma~\ref{lem:tree}, the number of arcs used in an optimal strategy in the original problem is at most $|V(G)|-1$.
Hence, our aim is to compute $\Psi(r,T,|V(G)|-1)$.

Clearly, $\Psi(u,\emptyset,m)=0$ for all $u\in V$ and $m\geq 0$. For non-empty $X$ and $m=0$, $\Psi(u,X,m)=0$ if all terminals in $X$ are equal to $u$, and $\Psi(u,X,m)=+\infty$ otherwise. We need the following claim.

\begin{claim}\label{cl:rec}
For $X\neq \emptyset$ and $m\geq 1$,  $\Psi(u,X,m)$ satisfies the following equation:
\begin{equation}
\begin{split}
\Psi(u,X,m)=\min\{&\Psi(u,X,m-1),\\
          &\Psi(u,X\setminus Y,m_1)+\Psi(v,Y,m_2)+\sum_{h=1}^{|Y|}c_{(u,v)}(h)\},        
\end{split}
\label{ec:claim}
\end{equation}
where the minimum is taken over all arcs $(u,v)\in E(G)$, $\emptyset \neq Y\subseteq X$, and $m_1,m_2\geq 0$ such that $m_1+m_2=m-1$; it is assumed that 
$\Psi(u,X,m)=\Psi(u,X,m-1)$ if the out-degree of $u$ is zero.     
\end{claim}

\begin{proof}
 Let 
$$\psi=\min\{\Psi(u,X,m-1),\Psi(u,X\setminus Y,m_1)+\Psi(v,Y,m_2)+\sum_{h=1}^{|Y|}c_{(u,v)}(h)\}.$$ 
We prove that $\Psi(u,X,m)=\psi$ by first showing that $\Psi(u,X,m)\geq\psi$, and then that $\Psi(u,X,m)\leq\psi$. Without loss of generality assume that $X=\{t_1,\ldots,t_{\ell}\}\subseteq T$, where $\ell=|X|$.

If $\Psi(u,X,m)=+\infty$, then $\Psi(u,X,m)\geq\psi$. Suppose that $\Psi(u,X,m)\neq+\infty$
and consider a strategy $s^*=(P_1,\ldots,P_{\ell})$ in the game $\Gamma_u^X$ which is optimal among those using at most $m$ arcs and, subject to this condition, the number of used arcs is minimum; in particular, $s^*$ has potential $\Psi(u,X,m)$. 
By Lemma~\ref{lem:tree}, $H=\cup_{i=1}^{\ell}P_i$ is an out-tree rooted in $u$. If $|E(H)|<m$, then
$\Psi(u,X,m)=\Psi(u,X,m-1)\geq \psi$. Assume that $|E(H)|=m$. As $m\geq 1$,  vertex $u$ has an out-neighbor $v$ in $H$. Denote by $H_1$ and $H_2$ the components of $H-(u,v)$, where $H_1$ is an out-tree rooted in $u$ and $H_2$ is an out-tree rooted in $v$. Let $Y\subseteq X$ be the multiset of terminals in $H_2$ and let $m_1=|E(H_1)|$, $m_2=|E(H_2)|$. Notice that exactly $|Y|$ players are using the arc $(u,v)$ in $s^*$ and $Y$ is nonempty.
Then $\Psi(u,X,m)\geq \Psi(u,X\setminus Y,m_1)+\Psi(v,Y,m_2)+\sum_{h=1}^{|Y|}c_{(u,v)}(h)\geq \psi$.

Now we prove that $\Psi(u,X,m)\leq\psi$. If $\psi=\Psi(u,X,m-1)$ then the claim is trivial, so let $v$, $Y$, $m_1$ and $m_2$ be such that $\psi=\Psi(u,X\setminus Y,m_1)+\Psi(v,Y,m_2)+\sum_{h=1}^{|Y|}c_{(u,v)}(h)$. 
Assume without loss of generality that $Y=\{t_1,\ldots,t_{\ell'}\}$ for some $\ell'\leq\ell$.
If $\Psi(u,X\setminus Y,m_1)=+\infty$ or $\Psi(v,Y,m_2)=+\infty$, then the inequality is trivial. Suppose that  $\Psi(u,X\setminus Y,m_1)\neq+\infty$ and $\Psi(v,Y,m_2)\neq+\infty$. Consider  
a strategy $s^*_1$ in the game $\Gamma_u^{X\setminus Y}$ that is optimal among those using at most $m_1$ arcs, and a strategy $s^*_2$ in the game $\Gamma_v^Y$ that is optimal among those using at most $m_2$ arcs. Of course, the potential of $s^*_1$ is equal to $\Psi(u,X\setminus Y,m_1)$, while the potential of $s^*_2$ is equal to $\Psi(u,Y,m_2)$.
We construct the strategy profile $s$ in the game $\Gamma_u^X$ as follows. For each terminal $t_j\in X\setminus Y$, the players use the $(u,t_j)$-path from $s_1^*$. For any $t_j\in Y$, the players use the $(v,t_j)$-path from $s_2^*$ after accessing $v$ from $u$ via the arc $(u,v)$, unless $u$ already lies on this $(v,t_j)$-path, in which case they simply use the corresponding subpath of the $(v,t_j)$-path. Note that $s$ uses at most $m_1+m_2+1=m$ arcs. Because for every $e\in E(G)$ and every $h\geq 1$,  we have that $c_e(h)\geq 0$, 
and $c_e(h)\geq c_e(h+1)$, 
we infer that $\Phi(s)\leq \psi$, as possible overlapping of arcs used in $s_1^*$, $s_2^*$ and the arc $(u,v)$ can only decrease the potential of $s$. Since $\Psi(u,X,m)\leq\Phi(s)$, this implies that $\Psi(u,X,m)\leq\psi$.
\end{proof}

In order to finish the proof of Theorem~\ref{thm:FPT}, we need to observe that using the recurrence (\ref{ec:claim}) one can compute the value $\Psi(r,T,|V(G)|-1)$ in time $3^{|\Pi|} \cdot  |G|^{\cO(1)}$.
\end{proof}

We use Theorem~\ref{thm:FPT} to construct  algorithm for \textsc{$p$-Local Search  on  potential  $\Phi$} and to conclude with the proof of Theorem~\ref{thm:RosenthalXP}.

\begin{proof}[Theorem of Theorem~\ref{thm:RosenthalXP}]
Consider an instance of \textsc{$p$-Local Search  on  potential  $\Phi$}. Let $T=\{t_1,\ldots,t_{|\Pi|}\}$ be the multiset of terminals and let $s$ be a strategy profile.
Recall that \textsc{$p$-Local Search  on  potential  $\Phi$} asks whether at most $k$ players can change their strategies in such a way that the potential decreases. Observe that we can assume that \emph{exactly} $k$ players are going to change their strategies because some of these players can choose their old strategies. There are $\binom{|\Pi|}{k}$ possibilities to choose a set of $k$ players $\Pi_0\subseteq \Pi$. We consider all possible choices and for each set $\Pi_0$, we check whether the players from this set can apply some strategy to decrease $\Phi$.   

Denote by $X\subseteq T$ the multiset of terminals of the players from $\Pi_0$, and let $s'=s^{-\Pi_0}$. We compute the potential $\Phi(s')$ for this strategy profile. Now we redefine the cost of edges as follows: for each $e\in E(G)$ and $h\geq 1$, $c'_e(h)=c_e(n_e(s')+h)$. 
Clearly, $c'_e(h)\geq 0$ and $c'_e(h)\geq c'_e(h+1)$. Let $\Phi'$ be the potential for these edge costs.
We find the minimum value of $\Phi'(s^*)$ for the set of players $\Pi_0$ and the corresponding terminals $X$. It remains to observe
that $\Phi(s')+\Phi'(s^*)=\min \{\Phi(s'')\ |\ s''=(s^{-\Pi_0},\sigma),\ \sigma\in \prod_{i\in \Pi_0}S^i \}$.
By Theorem~\ref{thm:FPT},  we can find $\Phi'(s^*)$ in time $3^k \cdot  |G|^{\cO(1)}$ and  the claim follows.
\end{proof}

\section{Intractability of local search for Rosenthal potential}\label{sect:hardness_rosenthal}

This section is devoted to the proof of 
the following theorem.

\begin{theorem}\label{thm:RosenthalW1-hard}  
\textsc{$p$-Local Search  on  potential  $\Phi$}, where  $\Phi$ is  Rosenthal  potential for multicasting game,  is W[1]-hard. 
\end{theorem}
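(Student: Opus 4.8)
The plan is to give an FPT-reduction from \textsc{Clique}, parameterized by the clique size $k$; this problem is W[1]-complete (it is the complement of the \textsc{Independent Set} problem mentioned in the preliminaries). Given a graph $H=(V_H,E_H)$ and an integer $k$, I construct an arena together with a starting profile $s$ and set the local-search radius also to $k$, so that $H$ has a clique of size $k$ if and only if the Rosenthal potential can be decreased by rerouting at most $k$ players. The single most important design decision is that the arena has one player $p_v$ for \emph{every} vertex $v\in V_H$, so that $|\Pi|=|V_H|$ is large. This largeness is unavoidable: by Theorem~\ref{thm:FPT} an instance with only $O(k)$ players would be solvable in time $3^{O(k)}\cdot |G|^{\cO(1)}$ and hence could not be W[1]-hard. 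The hardness must therefore come from the fact that the algorithm has to \emph{identify} which $k$ of the many players to reroute, and this search is exactly the search for a clique.

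For the gadgets I give each player $p_v$ exactly two $r\to t_v$ paths: a \emph{home} path (used in $s$) and an \emph{active} path. The home path of $p_v$ contains, for every neighbour $u$ of $v$ in $H$, a private edge $\hat m_{uv}$ of cost $2$ that no one else ever uses, together with otherwise private connector and terminal edges. The active path is identical except that (i) it contains one extra private \emph{penalty} edge of cost $a$ (fixed below), and (ii) each dummy $\hat m_{uv}$ is replaced by a single shared \emph{meeting} arc $m_{uv}$ of cost $2$, where the same physical arc $m_{uv}$ lies on the active paths of both $p_u$ and $p_v$ and is created precisely when $\{u,v\}\in E_H$. Since all edge costs have the Rosenthal form $c_e(h)=c_e/h$, the arena monotonicity $c_e(h)\geq c_e(h+1)$ holds automatically, so this is a legitimate instance of local search on Rosenthal potential.

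Rerouting a set $S$ of players from home to active then changes the potential by $\Delta(S)=|S|\cdot a-e(H[S])\cdot d$ with $d=1$. Indeed, an active player in isolation pays the penalty $+a$ and replaces each $\hat m_{uv}$ by $m_{uv}$ at no net change (a single user contributes $c_e\H(1)=c_e$); whereas each pair $u,v\in S$ with $\{u,v\}\in E_H$ now shares $m_{uv}$, whose contribution drops from $2c_e$ (the two private dummies $\hat m_{uv}$ and $\hat m_{vu}$) to $c_e\H(2)=\tfrac32 c_e$, a saving of $d=\tfrac12 c_e=1$. Choosing $a=\tfrac{k-1}{2}-\tfrac{1}{2k}$ makes $\Delta(S)<0$ exactly when $|S|=k$ and $e(H[S])=\binom{k}{2}$, that is, exactly when $S$ is a $k$-clique: a direct calculation gives $\Delta(S)>0$ for every set with $|S|<k$ (even if it is itself a clique, ruling out smaller cliques) and for every $k$-set spanning at most $\binom{k}{2}-1$ edges. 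Because within Hamming distance $k$ of $s$ each player can only toggle between its two paths, every profile in the $k$-neighbourhood corresponds to such a set $S$ with $|S|\le k$; hence the potential can be decreased if and only if $H$ has a $k$-clique. The reduction is polynomial and sets the parameter to $k$, so it is an FPT-reduction.

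The part I expect to require the most care is the directed wiring that makes this accounting exactly correct. I must realise, in a directed graph, that $p_u$ and $p_v$ can both traverse the common arc $m_{uv}=(x,y)$ --- entering $x$ and leaving $y$ along their own private connectors toward distinct terminals --- while guaranteeing that the \emph{only} edges ever shared between two distinct players are the intended meeting arcs, that all penalty and connector edges are genuinely private, and, most importantly, that the shared vertices $x,y$ do not open up unintended $r\to t_v$ paths that a player could exploit to improve in some other way. Routing each $p_v$ through the meeting arcs of all its incident graph-edges in series, without creating spurious overlaps or shortcuts, is the main engineering obstacle; once the arena is laid out so that the home and active paths are effectively the only relevant routes for each player, the potential bookkeeping reduces to the elementary harmonic computation above.
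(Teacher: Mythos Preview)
Your approach differs from the paper's, and the place where you explicitly defer the work is in fact a genuine gap. The paper reduces from \textsc{Multicoloured Clique}, creates $(k-1)\,|V(H)|$ players (so the new parameter is $k(k-1)$), and---crucially---arranges that every alternative route for a player passes through a pseudo-root $r'$ and then \emph{exactly one} edge-vertex $x_{uv}$ before reaching its terminal. Because each rerouted player touches only a single shared gadget, the set of $r\to t$ paths in the arena stays trivially small, and the hard work is pushed into the potential accounting (harmonic numbers, majorization, a carefully tuned $\varepsilon$).

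Your construction makes the opposite trade: one player $p_v$ per vertex, whose active path must traverse in series the meeting arcs $m_{uv}$ for \emph{all} neighbours $u$ of $v$. That is precisely what you cannot wire without spoiling the claim that ``each player can only toggle between its two paths''. The head and tail of every $m_{uv}$ lie on the active paths of both $p_u$ and $p_v$, so at the head of $m_{uv}$ player $p_v$ may follow $p_u$'s private connector instead of its own and keep walking through the gadget graph. Already for $H$ a $4$-cycle $u_1u_2u_3u_4$ with the natural orderings, player $p_{u_1}$ acquires, besides its intended active path through $m_{u_1u_2}$ and $m_{u_1u_4}$, the spurious route $m_{u_1u_2}\to m_{u_2u_3}\to m_{u_3u_4}\to m_{u_1u_4}\to t_{u_1}$. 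In general the family of $r\to t_v$ paths is governed by the walk structure of $H$ itself, so your formula $\Delta(S)=|S|\cdot a - e(H[S])$ no longer describes every profile within Hamming distance $k$ of $s$, and the ``only if'' direction is unproved. I do not see a local fix that keeps the one-player-per-vertex accounting yet forces each player to have only the two intended strategies; the paper's device of splitting each vertex into $k-1$ single-purpose players, so that each reroute touches a single shared arc, is exactly what sidesteps this difficulty.
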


Before we give the proof, let us remind a classical inequality that will be useful.

\begin{definition}
Let $a_1\geq a_2\geq \ldots \geq a_n$ and $b_1\geq b_2\geq \ldots \geq b_n$ be sequences of real numbers. We say that sequence $(a_i)$ {\emph{majorizes}} sequence $(b_i)$, denoted $(a_i)\succeq (b_i)$, if $\sum_{i=1}^n a_i=\sum_{i=1}^n b_i$ and $\sum_{i=1}^k a_i \geq \sum_{i=1}^k b_i$ for all $1\leq k<n$.
\end{definition}

\begin{theorem}[Hardy-Littlewood-Poly\'{a} inequality, \cite{hlp}]\label{hlp}
Let $f$ be a convex function on interval $[a,b]$ and $a_1\geq a_2\geq \ldots \geq a_n$ and $b_1\geq b_2\geq \ldots \geq b_n$ be sequences of real numbers from $[a,b]$. If $(a_i)\succeq (b_i)$, then 
$$\sum_{i=1}^n f(a_i)\geq \sum_{i=1}^n f(b_i).$$
\end{theorem}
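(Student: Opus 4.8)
The plan is to reduce the inequality to a single application of summation by parts (Abel summation), invoking convexity only through supporting lines. First I would unpack the majorization hypothesis into the two facts I actually need. Setting $D_k=\sum_{i=1}^k (a_i-b_i)$, the condition $(a_i)\succeq(b_i)$ says exactly that $D_k\geq 0$ for every $1\leq k<n$ and $D_n=0$; these partial-sum inequalities, together with the equal-total constraint, are the only properties of the two sequences I will use. Everything else in the argument will come from convexity of $f$.

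Convexity enters through supporting lines. For each index $i$ I would take $g_i=f'_+(b_i)$, the right derivative of $f$ at $b_i$, which is a valid subgradient; the supporting-line inequality then gives
\[ f(a_i) - f(b_i) \geq g_i\,(a_i - b_i) \]
for every $i$, with no restriction on the sign of $a_i-b_i$. Summing over $i$ yields $\sum_i f(a_i)-\sum_i f(b_i)\geq \sum_i g_i(a_i-b_i)$, so it suffices to prove that the right-hand side is non-negative. The structural fact that makes this work is the monotonicity of the subgradient: the right derivative of a convex function is non-decreasing, and since $b_1\geq b_2\geq\cdots\geq b_n$ this forces $g_1\geq g_2\geq\cdots\geq g_n$.

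Now I would apply summation by parts to $\sum_{i=1}^n g_i(a_i-b_i)$. Writing $d_i=a_i-b_i=D_i-D_{i-1}$ with $D_0=0$ and regrouping, one obtains
\[ \sum_{i=1}^n g_i d_i = g_n D_n + \sum_{i=1}^{n-1}(g_i - g_{i+1})\,D_i. \]
The boundary term vanishes because $D_n=0$, and each remaining summand is a product of $g_i-g_{i+1}\geq 0$ (monotonicity of the subgradients) with $D_i\geq 0$ (majorization), hence non-negative. This gives $\sum_i g_i d_i\geq 0$ and therefore $\sum_i f(a_i)\geq \sum_i f(b_i)$, as required.

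The one delicate point I would treat with care is the monotone selection of subgradients in the non-differentiable case: at a point where $f$ fails to be differentiable the subdifferential is an interval $[f'_-(b_i),f'_+(b_i)]$, so one must check that a single consistent choice—taking $g_i=f'_+(b_i)$ throughout—simultaneously gives a genuine supporting line at each $b_i$ and the global ordering $g_1\geq\cdots\geq g_n$. Both follow from the standard facts that $f'_+$ is non-decreasing and that $f'_+(x)$ always lies in the subdifferential at $x$; if one is willing to assume $f$ differentiable, the step is immediate with $g_i=f'(b_i)$. An alternative route, which sidesteps subgradients but needs more machinery, would be to use the characterization of majorization by a finite sequence of Robin Hood transfers and verify that each transfer leaves $\sum_i f$ non-increasing; I would prefer the Abel-summation argument above as it is self-contained and avoids that combinatorial detour.
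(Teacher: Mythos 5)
The paper does not actually prove this statement: it is quoted as a classical result with a citation to the Hardy--Littlewood--P\'olya book, so there is no in-paper proof to compare against. Judged on its own, your argument is correct and is in fact the standard proof of the majorization inequality (often attributed to Karamata): supporting lines at the $b_i$ with a monotone selection of subgradients, followed by summation by parts against the partial sums $D_k$, whose nonnegativity is exactly the majorization hypothesis. The Abel identity checks out ($D_0=0$ kills one boundary term, $D_n=0$ the other), and the sign analysis $(g_i-g_{i+1})D_i\geq 0$ is right, with $g_i\geq g_{i+1}$ following from $b_i\geq b_{i+1}$ and monotonicity of $f'_+$.

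One caveat, precisely at the point you yourself flagged as delicate: the claim that $f'_+(x)$ always exists and lies in the subdifferential is true at interior points of $[a,b]$ but can fail at the endpoints. At $x=b$ the right derivative is undefined, and at $x=a$ it may equal $-\infty$ (e.g.\ $f(x)=-\sqrt{x}$ on $[0,1]$); indeed a convex function on a closed interval need not admit any supporting line at an endpoint. The gap is easy to close: if $b_i=b$ for $i\leq m$, then the prefix inequality $\sum_{i\leq m}a_i\geq\sum_{i\leq m}b_i=mb$ together with $a_i\leq b$ forces $a_i=b$ for those indices, and symmetrically $b_i=a$ on a suffix forces $a_i=a$ there (using $\sum_{i\geq m'}a_i\leq\sum_{i\geq m'}b_i$, which follows from $D_{m'-1}\geq 0$ and $D_n=0$); stripping these indices preserves majorization and leaves all remaining $b_i$ interior, where your choice $g_i=f'_+(b_i)$ is legitimate. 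Alternatively one can replace one-sided derivatives by finite chord slopes (divided differences), which are automatically monotone for convex $f$ and only ever reference points of $[a,b]$. For the application in this paper --- $f$ a concave extension of the harmonic numbers $\H$, with the inequality reversed by a sign change as noted after the theorem statement --- the endpoint issue is vacuous. With this one repair your proof is complete, and it is indeed more self-contained than the Robin Hood transfer route you mention, which needs the nontrivial fact that majorization decomposes into finitely many such transfers.
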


Let us note that by changing the sign of $f$ we obtain that for concave functions the same result holds, but with inequality reversed.

We are now ready to prove Theorem~\ref{thm:RosenthalW1-hard}.

\begin{proof}
We provide an FPT reduction from the {\sc{Multicoloured Clique}} problem, which is known to be W[1]-hard~\cite{fellows-hermelin-rosamond-vialette-multicolored-hardness}.

  \medskip

\defparproblem{{\sc{Multicoloured Clique}}}{An undirected graph $H$ with vertices partitioned into $k$ sets $V_1,V_2,\ldots,V_k$, such that each set $V_i$ is an independent set in $H$.}{$k$}{Is there a clique $C$ in $G$ of size $k$?}

 \medskip

Observe that by the assumption that each $V_i$ is an independent set, the clique $C$ has to contain exactly one vertex from each part $V_i$.

We take an instance $(H,k)$ of {\sc{Multicoloured Clique}} and construct an instance $(G,s,k(k-1))$ of \textsc{$p$-Local Search on  potential  $\Phi$}. First, we provide the construction of the new instance; then, we prove that the constructed instance is equivalent to the input instance of {\sc{Multicoloured Clique}}. During the reduction we assume $k$ to be large enough; for constant $k$ we solve the instance $(H,k)$ in polynomial time by a brute-force search and output a trivial YES or NO instance of \textsc{$p$-Local Search   on  potential  $\Phi$}.

\medskip

\noindent{\bf{Construction.}} First create the root vertex $r$. For every $u\in V_i$, we create $k$ vertices: $\overline{u}$ and $u_1,\ldots,u_{i-1},u_{i+1},\ldots,u_k$. Denote by $F_u$ the set $\{u_1,\ldots,u_{i-1},u_{i+1},\ldots,u_k\}$. We connect the created vertices in the following manner: we construct one arc $(r,\overline{u})$ with cost $R=k^2$, and for all $j\in\{1,2,\ldots,i-1,i+1,\ldots,k\}$ we construct arc $(\overline{u},u_j)$ with cost $0$. With every vertex $u_j$ for all $u\in V(H)$ we associate a player that builds a path from $r$ to $u_j$. In the initial strategy profile $s$, each of $(k-1)|V(H)|$ players builds a path that leads to his vertex via the corresponding vertex $\overline{u}$. Observe that the potential of this strategy profile is equal to $|V(H)|\cdot R\cdot \H(k-1)$.

We now construct the part of the graph that is responsible for the choice of the clique. We create a {\emph{pseudo-root}} $r'$ and an arc $(r,r')$ with cost \[W=\frac{1}{\H(k(k-1))}\left(k\cdot R\cdot \H(k-1)-\frac{3}{2}\binom{k}{2}-\eps\right),\] where $\eps=\frac{k-1}{k^5}$. Note that $W\geq 1$ for sufficiently large $k$. For every edge $uv\in E(H)$, where $u\in V_i$ and $v\in V_j$, $i\neq j$,   we create a vertex $x_{uv}$,   arc $(r',x_{uv})$ of cost $1$, and arcs $(x_{uv},u_j)$, $(x_{uv},v_i)$ of cost $0$. This concludes the construction.

Before we proceed with the formal proof of the theorem, let us give some intuition behind the construction. 
Given a clique $C$ in $H$, we can construct a common strategy of $k(k-1)$ players assigned to vertices from $\bigcup_{u\in V(C)}F_u$, who can agree to jointly rebuild their paths via the pseudo-root $r'$. The ``cost of entrance"  for remodelling the strategy in this manner is paying for the expensive arc $(r,r')$; however, this can amortised by sharing cheap arcs $(r',x_{uv})$ for $uv\in E(C)$. The costs have been chosen so that only the maximum possibility of sharing, which corresponds to a clique in $H$, can yield a decrease of the potential.

\begin{figure}[htbp]
\begin{center}
\begin{tikzpicture}[scale=0.5]
  
 \tikzstyle{vertex}=[circle,fill=black,minimum size=0.09cm,inner sep=0pt]
 \node[vertex] (r) at (-12cm,0cm) {};
 \draw[left] (r) node {{\tiny{$r$}}};

 \foreach \x in {-7cm, -1cm, 9cm} 
 {
	\draw[rounded corners] (\x-0.9cm,-4cm) rectangle (\x+0.9cm,4cm);
 }

 \draw (-7.9cm,1.75cm) -- (-6.1cm,1.75cm);
 \draw (-7.9cm,1.25cm) -- (-6.1cm,1.25cm);
 \draw (-1.9cm,1.25cm) -- (-0.1cm,1.25cm);
 \draw (-1.9cm,0.75cm) -- (-0.1cm,0.75cm);

 \foreach \x/\z in {-0.75cm/1, -0.45cm/2, -0.15cm/3, 0.15cm/4, 0.45cm/5, 0.75cm/6} 
 {
	\node[vertex] (u1\z) at (\x-7cm,1.5cm) {};
	\node[vertex] (u2\z) at (\x-1cm,1cm) {};
 }

 \draw (-7cm,3cm) node {\tiny{\vdots}};
 \draw (-7cm,-1.25cm) node {\tiny{\vdots}};
 \draw (-1cm,2.75cm) node {\tiny{\vdots}};
 \draw (-1cm,-1.5cm) node {\tiny{\vdots}};
 \draw (4cm,0cm) node {\ldots};

 \draw (-7cm,-4.6cm) node {{\tiny{$V_1$}}};
 \draw (-1cm,-4.6cm) node {{\tiny{$V_2$}}};
 \draw (9cm,-4.6cm) node {{\tiny{$V_k$}}};

 \node[vertex] (ou) at (-8.5cm,0.5cm) {};
 \draw[below] (ou) node {{\tiny{$\overline{u}$}}};

 \draw[->] (r) -- (ou);
 \foreach \z in {1,2,3,4,5,6} 
 {
	\draw[->] (ou) -- (u1\z);
 }
 \draw[above] (-10.25cm, 0.25cm) node {{\tiny{$R$}}};

 \node[vertex] (rp) at (-10cm,6cm) {};
 \draw[above] (rp) node {{\tiny{$r'$}}};

 \draw[->] (r) -- (rp);
 \draw[left] (-11cm,3cm) node {{\tiny{$W$}}};

 \node[vertex] (x) at (-4cm,4cm) {};
 \draw[right] (x) node {{\tiny{$x_{uv}$}}};
 \draw[->] (rp) .. controls (-6cm,7cm) and (-5cm,7cm) .. (x);
 \draw (-6cm,6.8cm) node {{\tiny{$1$}}};

 \draw[->] (x) .. controls (-4cm,2cm) and (-7.45cm,3.5cm) .. (u12);
 \draw[->] (x) .. controls (-4cm,2cm) and (-1.75cm,3cm) .. (u21);

\end{tikzpicture}
\caption{Graph $G$}\label{fig:construction}
\end{center}
\end{figure}

\noindent{\bf{From a clique to a remodelled strategy profile.}} Assume that $C$ is a clique in $H$ with $k$ vertices. Let us remind, that in the initial strategy profile $s$ each  player is  using the corresponding arc  $(r,\overline{u})$ for his path. 
We construct the new strategy profile $s'$ by changing strategies of $k(k-1)$ players as follows. 
For every $uv\in E(C)$, where $u\in V_i$ and $v\in V_j$,  $i\neq j$, the players associated with vertices $u_j$ and $v_i$ reroute  their paths so that in $s'$ they lead via $r'$ and $x_{uv}$ to respective targets. In comparison to the profile $s$, the new profile $s'$:
\begin{itemize}
\item has congestion withdrawn from arcs $(r,\overline{u})$ for $u\in V(C)$---this decreases the potential by $k\cdot R\cdot \H(k-1)$;
\item has congestion introduced to arcs $(r,r')$ and $(r',x_{uv})$ for $uv\in E(C)$---this increases the potential by $W\cdot \H(k(k-1))+\frac{3}{2}\binom{k}{2}$.
\end{itemize}
Therefore, 
$\Phi(s')=\Phi(s)-k\cdot R\cdot \H(k-1)+W\cdot \H(k(k-1))+\frac{3}{2}\binom{k}{2}=\Phi(s)-\eps<\Phi(s).$

\noindent{\bf{From a remodelled strategy profile to a clique.}}
Let $s'$ be a strategy profile such that $\Phi(s')<\Phi(s)$ and $D(s,s')=\ell\leq k(k-1)$. Let $L$ be the set of players who have rebuilt their strategies in $s'$; then $|L|=\ell$. Let $p$ be a player in $L$, who is assigned to a vertex $u_j\in F_u$ for some $u\in V(H)$. Observe, that the only possibility of rebuilding the strategy for $p$ is to choose a path leading through $r'$ and a vertex $x_{uv}$ for some $uv\in E(H)$, $v\in V_j$. We now examine all the arcs of the graph $G$ with nonzero costs in order to provide a lower bound on $\Delta\Phi=\Phi(s')-\Phi(s)$. We partition the arcs into three classes: (i) arcs $(r,\overline{u})$ for $u\in V(H)$, (ii) arc $(r,r')$, and (iii) arcs $(r',x_{uv})$ for $uv\in E(H)$. For each of these classes we analyze the {\emph{contribution}} to the difference $\Delta \Phi$; by this we mean the difference of contributions to potentials $\Phi(s')$ and $\Phi(s)$ from the corresponding arcs.

Firstly, consider arcs $(r,\overline{u})$ for $u\in V(H)$. In total, $\ell$ players withdraw their paths from these arcs. The contribution to $s'$ of these arcs is equal to $\sum_{u\in V(H)} R\cdot \H(a_{\overline{u}})$, where $a_{\overline{u}}$ is the number of players using the arc $(r,\overline{u})$ in strategy profile $s'$. We know that $\sum_{u\in V(H)} a_{\overline{u}}=(k-1)|V(H)|-\ell$, while $0\leq a_{\overline{u}}\leq k-1$ for all $u\in V(H)$. Observe that then the sequence $(a_{\overline{u}})$ is majorized by a sequence consisting of $|V(H)|-\left\lfloor\frac{\ell}{k-1}\right\rfloor-1$ terms $(k-1)$, one term $(k-1) - (\ell\text{ mod }(k-1))$, and $\left\lfloor\frac{\ell}{k-1}\right\rfloor$ zeroes. Therefore, since $\H$ can be extended to a concave function, by Theorem~\ref{hlp} we infer that:
\begin{equation}\label{eq1}
\sum_{u\in V(H)} \H(a_{\overline{u}})\geq \left(|V(H)|-\left\lceil\frac{\ell}{k-1}\right\rceil\right)\H(k-1)+\H((k-1) - (\ell\text{ mod }(k-1))).
\end{equation}
Moreover, from concavity of function $\H$ we infer that 
\begin{equation}\label{eq2}
\H((k-1)-(\ell \text{ mod } (k-1)))\geq \frac{(k-1)-(\ell \text{ mod } (k-1))}{k-1}\H(k-1).
\end{equation}
Using (\ref{eq1}) and (\ref{eq2}) we infer that
\begin{equation}\label{eq3}
\sum_{u\in V(H)} \H(a_{\overline{u}})\geq \left(|V(G)|-\frac{\ell}{k-1}\right)\H(k-1).
\end{equation}
This implies that the contribution of these arcs to $\Delta\Phi$ is at least $-R\cdot\frac{\ell}{k-1}\cdot \H(k-1)$.

Now, consider the arc $(r,r')$. There are exactly $\ell$ players using this arc in $s'$, while in $s$ nobody was using it. Therefore, the contribution from this arc to $\Delta\Phi$ is equal to $W\cdot \H(\ell)$.

Finally, consider arcs $(r',x_{uv})$ for $uv\in E(H)$. All the $\ell$ players which rebuild their strategies in $s'$ use exactly one such arc. Moreover, each of these edges can be shared by at most two players. Therefore, the contribution to $\Delta\Phi$ from these edges is at least $\frac{\ell}{2}\cdot \H(2)=\frac{3}{4}\ell$, and the contribution is larger by at least $\frac{1}{2}$ if any player does not share the arc with some other player.

Concluding, since $\Phi(s')<\Phi(s)$ we have that $0>\Delta\Phi\geq \frac{3}{4}\ell+W\cdot \H(\ell)-R\cdot \frac{\ell}{k-1}\cdot \H(k-1)$. Therefore,
\begin{equation}\label{eq-main}
R\cdot\frac{\H(k-1)}{k-1}>\frac{3}{4}+W\cdot \frac{\H(\ell)}{\ell}.
\end{equation}
Note that once we infer that the contribution from any of the three classes of arcs is actually larger than estimeted in the previous paragraphs, we can add the corresponding term to the right-hand side of equation (\ref{eq-main}).
 
We claim now that $\ell=k(k-1)$. Let us define $g(t)=\frac{\H(t)}{t}$. Observe that for $t>1$ we have that
\begin{eqnarray*}
g(t)-g(t-1) & = &\frac{\H(t)}{t}-\frac{\H(t-1)}{t-1}=\H(t-1)\left(\frac{1}{t}-\frac{1}{t-1}\right)+\frac{1}{t^2} \\
& = & \frac{1}{t^2}-\frac{\H(t-1)}{t(t-1)}\leq \frac{1}{t^2}-\frac{1}{t(t-1)}\leq -\frac{1}{t^3}.
\end{eqnarray*}
Hence, function $g$ is decreasing and $\frac{\H(\ell)}{\ell}\geq \frac{\H(k(k-1))}{k(k-1)}$. 

Assume that $\ell<k(k-1)$; then it follows that $\frac{\H(\ell)}{\ell}\geq \frac{\H(k(k-1))}{k(k-1)}+\frac{1}{k^6}$. We obtain that
\begin{eqnarray*} 
R\cdot\frac{\H(k-1)}{k-1} & > & \frac{3}{4}+W\cdot \frac{\H(k(k-1))}{k(k-1)}+W\cdot \frac{1}{k^6} \\
& = & \frac{3}{4} + \frac{1}{k(k-1)}\left(k\cdot R\cdot \H(k-1)-\frac{3}{2}\binom{k}{2}-\eps\right)+W\cdot \frac{1}{k^6} \\
& = & \frac{3}{4} + R\cdot\frac{\H(k-1)}{k-1} -\frac{3}{4} -\frac{\eps}{k(k-1)} + W\cdot \frac{1}{k^6} \\
& = & R\cdot\frac{\H(k-1)}{k-1} + \frac{W-1}{k^6} \geq R\cdot\frac{\H(k-1)}{k-1}.
\end{eqnarray*}
The last inequality follows from $W\geq 1$, which is true for $k$ large enough. This contradiction shows that $\ell=k(k-1)$. \medskip

Every arc of the form $(r',x_{uv})$ can be used by at most two players. 
Now we want to prove that no arc $(r',x_{uv})$ can be used by exactly one player in the strategy profile $s'$.
For the sake of contradiction, we assume that at least one of the arcs $(r',x_{uv})$ is used by exactly one player in $s'$. Then the total contribution to $\Delta\Phi$ of the arcs of the form $(r',x_{uv})$ is at least $\frac{3}{4}\ell+\frac{1}{2}$. Similarly as before, we obtain that
\begin{eqnarray*} 
R\cdot\frac{\H(k-1)}{k-1} & > & \frac{3}{4}+\frac{1}{2k(k-1)}+W\cdot \frac{\H(k(k-1))}{k(k-1)} \\
& = & R\cdot\frac{\H(k-1)}{k-1} + \frac{1}{2k(k-1)} -\frac{\eps}{k(k-1)} \\
& = & R\cdot\frac{\H(k-1)}{k-1} + \left(\frac{1}{2}-\eps\right)\cdot\frac{1}{k(k-1)} \geq R\cdot\frac{\H(k-1)}{k-1}
\end{eqnarray*}
This contradiction shows that in the strategy profile $s'$, all the arcs of the form $(r',x_{uv})$ are used by zero or by two players.
\medskip

Finally, we want to prove that the sequence $(a_{\overline{u}})$ contains exactly $|V(H)|-k$ terms $k-1$ and $k$ zeroes, i.e. 
the set of players that did rebuild their strategies is ``concentrated" on $k$ vertices of $H$.
Suppose that this is not the case. Then the sequence $(a_{\overline{u}})$ is majorized by a sequence containing $|V(G)|-k-1$ terms $k-1$, one term $k-2$, one term $1$ and $k-1$ zeroes. By 
  Theorem~\ref{hlp},  the contribution of arcs of the form $(r,\overline{u})$ to $\Delta\Phi$ is at least 
\begin{eqnarray*}
 &&\hskip-1cm R  \cdot   (   ( |V(H)|- k-1)\cdot \H(k-1)    +    \H(k-2)+\H(1))  -R\cdot |V(H)|\cdot \H(k-1)\\ &= &
R\cdot (|V(H)|-k)\cdot \H(k-1)+R\cdot\left(1-\frac{1}{k-1}\right) -R\cdot |V(H)|\cdot \H(k-1)\\ &=& 
-R\cdot k\cdot \H(k-1) +R\cdot\left(1-\frac{1}{k-1}\right).
\end{eqnarray*}
Similarly as before, we obtain that
\begin{eqnarray*} 
R\cdot\frac{\H(k-1)}{k-1} & > & \frac{3}{4}+W\cdot \frac{\H(k(k-1))}{k(k-1)}+\frac{1}{k(k-1)}\cdot R\cdot\left(1-\frac{1}{k-1}\right) \\
& = & R\cdot\frac{\H(k-1)}{k-1} + \frac{R}{k(k-1)}\cdot\left(1-\frac{1}{k-1}\right) -\frac{\eps}{k(k-1)} \\
& \geq & R\cdot\frac{\H(k-1)}{k-1}.
\end{eqnarray*}
This contradiction shows that we can distinguish $k$ vertices $u^1,u^2,\ldots,u^k$ such that $L$ is exactly the set of players assigned to vertices $\bigcup_{i=1}^k F_{u^i}$. We claim that $H[\{u^1,u^2,\ldots,u^k\}]$ is a clique. 

Consider the vertex $u^1$. Without loss of generality we  assume that $u^1\in V_1$. In strategy profile $s'$, the player associated with vertex $u^1_j$ has to share an arc $(r',x_{u^1v})$ for some $v\in V_j$, for $j=2,3,\ldots,k$. Therefore, the set $\{u^2,\ldots,u^k\}$ has to contain a vertex from each of the sets $V_2,V_3,\ldots,V_k$; assume then, without loss of generality, that $u^j\in V_j$ for all $j=2,\ldots,k$.

Let us take $u^i$ and $u^j$ for $i\neq j$; we argue that $u^iu^j\in E(H)$, which will finish the proof. Consider players associated with vertices $u^i_j$ and $u^j_i$: they have to share an arc outgoing from $r'$, so there has to exist a vertex $x_{u^iu^j}$ and the corresponding arc $(r',x_{u^iu^j})$. From the construction of $G$ we infer that $u^iu^j\in E(H)$.
\end{proof}   


\begin{thebibliography}{10}

\bibitem{EmileLbook}
{\sc E.~H.~L. Aarts and J.~K. Lenstra}, {\em Local Search in Combinatorial
  Optimization}, Princeton University Press, 1997.

\bibitem{Albers09}
{\sc S.~Albers}, {\em On the value of coordination in network design}, SIAM J.
  Comput., 38 (2009), pp.~2273--2302.

\bibitem{AnshelevichDKTWR08}
{\sc E.~Anshelevich, A.~Dasgupta, J.~M. Kleinberg, {\'E}.~Tardos, T.~Wexler,
  and T.~Roughgarden}, {\em The price of stability for network design with fair
  cost allocation}, SIAM J. Comput., 38 (2008), pp.~1602--1623.

\bibitem{AnshelevichDTW08}
{\sc E.~Anshelevich, A.~Dasgupta, {\'E}.~Tardos, and T.~Wexler}, {\em
  Near-optimal network design with selfish agents}, Theory of Computing, 4
  (2008), pp.~77--109.

\bibitem{ChekuriCLNO07}
{\sc C.~Chekuri, J.~Chuzhoy, L.~Lewin-Eytan, J.~Naor, and A.~Orda}, {\em
  Non-cooperative multicast and facility location games}, IEEE Journal on
  Selected Areas in Communications, 25 (2007), pp.~1193--1206.

\bibitem{ChenRV10}
{\sc H.-L. Chen, T.~Roughgarden, and G.~Valiant}, {\em Designing network
  protocols for good equilibria}, SIAM J. Comput., 39 (2010), pp.~1799--1832.

\bibitem{DowneyF99}
{\sc R.~G. Downey and M.~R. Fellows}, {\em Parameterized complexity},
  Springer-Verlag, New York, 1999.

\bibitem{DreyfusW72}
{\sc S.~E. Dreyfus and R.~A. Wagner}, {\em The {S}teiner problem in graphs},
  Networks, 1 (1972), pp.~195--207.

\bibitem{EpsteinFM07}
{\sc A.~Epstein, M.~Feldman, and Y.~Mansour}, {\em Strong equilibrium in cost
  sharing connection games}, in Proceedings 8th ACM Conference on Electronic
  Commerce (EC-2007), ACM, 2007, pp.~84--92.

\bibitem{FellowsFLRSV12}
{\sc M.~R. Fellows, F.~V. Fomin, D.~Lokshtanov, F.~A. Rosamond, S.~Saurabh, and
  Y.~Villanger}, {\em Local search: Is brute-force avoidable?}, J. Comput.
  Syst. Sci., 78 (2012), pp.~707--719.

\bibitem{fellows-hermelin-rosamond-vialette-multicolored-hardness}
{\sc M.~R. Fellows, D.~Hermelin, F.~A. Rosamond, and S.~Vialette}, {\em On the
  parameterized complexity of multiple-interval graph problems}, Theor. Comput.
  Sci., 410 (2009), pp.~53--61.

\bibitem{FlumGrohebook}
{\sc J.~Flum and M.~Grohe}, {\em Parameterized Complexity Theory},
  Springer-Verlag, Berlin, 2006.

\bibitem{GuptaKPR07}
{\sc A.~Gupta, A.~Kumar, M.~P{\'a}l, and T.~Roughgarden}, {\em Approximation
  via cost sharing: Simpler and better approximation algorithms for network
  design}, J. ACM, 54 (2007), p.~11.

\bibitem{GuptaST08}
{\sc A.~Gupta, A.~Srinivasan, and {\'E}.~Tardos}, {\em Cost-sharing mechanisms
  for network design}, Algorithmica, 50 (2008), pp.~98--119.

\bibitem{hlp}
{\sc G.~H. Hardy, G.~Polya, and J.~E. Littlewood}, {\em Inequalities}, The
  University press, Cambridge [Eng.], 1934.

\bibitem{KleinbergT05}
{\sc J.~Kleinberg and E.~Tardos}, {\em Algorithm design}, Addison-Wesley,
  Boston, MA, USA, 2005.

\bibitem{KoutsoupiasP09}
{\sc E.~Koutsoupias and C.~H. Papadimitriou}, {\em Worst-case equilibria},
  Computer Science Review, 3 (2009), pp.~65--69.

\bibitem{Marx08a}
{\sc D.~Marx}, {\em Local search}, Parameterized Complexity Newsletter, 3
  (2008), pp.~7--8.

\bibitem{EmileMKbook}
{\sc W.~Michiels, E.~H.~L. Aarts, and J.~Korst}, {\em Theoretical Aspects of
  Local Search}, Springer-Verlag, 2007.

\bibitem{Niedermeierbook06}
{\sc R.~Niedermeier}, {\em Invitation to fixed-parameter algorithms}, Oxford
  University Press, 2006.

\bibitem{Rosenthal73}
{\sc R.~W. Rosenthal}, {\em A class of games possessing pure-strategy {N}ash
  equilibria}, Internat. J. Game Theory, 2 (1973), pp.~65--67.

\bibitem{RoughgardenS09}
{\sc T.~Roughgarden and M.~Sundararajan}, {\em Quantifying inefficiency in
  cost-sharing mechanisms}, J. ACM, 56 (2009).

\bibitem{RoughgardenT02}
{\sc T.~Roughgarden and {\'E}.~Tardos}, {\em How bad is selfish routing?}, J.
  ACM, 49 (2002), pp.~236--259.

\bibitem{Szeider09}
{\sc S.~Szeider}, {\em The parameterized complexity of k-flip local search for
  sat and max sat}, Discrete Optimization, 8 (2011), pp.~139--145.

\end{thebibliography}

\end{document}